\newtheorem{Theorem}{\textbf{Theorem}}
\newtheorem{Corollary}{\textbf{Corollary}}
\title{Performance Improvement of the Compressive Classifier Using Equi-Norm Tight Frames}
\name{Hailong Shi , Hao Zhang and Xiqin Wang
\thanks{This work was supported in part by the National Basic Research Program of China (973 Program, No. 2010CB731901) and in part by the National Natural Science Foundation of China (No. 40901157 and No. 61201356).}}
\address{Intelligent Sensing Lab, Department of Electronic Engineering, Tsinghua University, Beijing\\
Email: shl06@mails.tsinghua.edu.cn}
\begin{document}
%
\maketitle
\begin{abstract}
\par Classifying sparse signals contaminated by Gaussian noise with compressive measurements is different from common sparse recovery, for its focus is minimizing the probability of false classification instead of error of estimation. This paper considered the way to decrease the probability of false classification for compressive classifier. It is proved rigorously that the probability of false classification could be reduced by tighten the measurement matrix, that is, to make it row-orthogonal simply. As is well known, equiangular tight frame (ETF) is the best choice for measurement matrix for its optimal worst-case performance. But its existence and construction is problematic on not a few dimensions. So our results provided a convenient approach to improve the performance of compressive classifiers -- The tightened measurement matrices could be better than before. Numerical results illustrated the validity of our conclusion.
\end{abstract}
\begin{keywords}
Compressive Classification, Measurement Matrix, Tight Frame, Equiangular Tight Frame (ETF)
\end{keywords}
\section{Introduction}
\label{sec:intro}

\par In recent years considerable progress has been made towards sensing and recovery of sparse signals, known as Compressive Sensing \cite{CSCandes}\cite{RobustUncertainty}. However, less attention has been paid to classifying sparse signals from compressive measurements, which can be stated as Compressive Classification.
\par The Compressive Classification is derived from the theory of statistical signal processing \cite{KayV2}. In this compressive scenario, different sparse signals contaminated by gaussian noise are compressed by under-determined linear transformations, and then the compressed measurements are used for signal classification. Since the objective in classification is to minimize the error probability, or the probability of false classification of the classifiers, analysis of the influence of measurement matrices on the classification is different from that of estimating sparse signals.
\par To the best of our knowledge, the earliest analysis of the performance of Compressive Classification is from \cite{Davenport06detectionand}\cite{SPComp} by Davenport in 2006. In these publications, performance bounds on the probability of false classification of Compressive Classifiers using row-orthogonal random projection matrices were analyzed.
Besides, there are other publications such as Ramin Zahedi \textit{et al.} 's works \cite{Ramin2010}\cite{Zahedi201264}, in which Equiangular Tight Frames (ETFs, \cite{Waldron2009}) were considered and proved to have best worst-case performance as the measurement matrices in Compressive Classification under the constraint that the measurement matrices are row-orthogonal.
\par
This paper is mainly concerned about the implementation of measurement matrices to improve the performance of Compressive Classification. We noticed that all these publications mentioned above (\cite{Davenport06detectionand}\cite{SPComp}\cite{Ramin2010}\cite{Zahedi201264}) and many other related most recent researches (\cite{RaoICCASP2012}\cite{PerfLimitsTranSP}) did their analysis under the constraint that the measurement matrices are row-orthogonal, \textit{i.e.} the column vectors form an equi-norm tight frame. The reason why they use this constraint is for simplicity and avoiding the coloring of received signals between different hypotheses. While in this paper, we will prove that the transformation to equi-norm tight frames from arbitrary measurement matrices will reduce the probability of false classification of the commonly used Compressive Classifier, which coincides with the row-orthogonal constraint commonly used above in (\cite{Davenport06detectionand}\cite{SPComp}\cite{Ramin2010}\cite{Zahedi201264}). Although there are already proofs (\cite{Ramin2010}\cite{Zahedi201264}) showing that Equiangular Tight Frames (ETFs) have best worst-case performance among all tight frames, our job is different from theirs. Because constructing equi-norm tight frames is much simple and practical, so our results could provide a convenient approach to improve the performance of Compressive Classifiers -- The tightened, or row-orthogonalized measurement matrices could be better than before.
\par
The remainder of this paper is organized as follows:
In Section 2, we give the problem formulation of Compressive Classification and the error probability; in Section 3, we demonstrate our main result and proof of the influence of equi-norm tight frames on the error probability, and then some discussions and remarks; and in the last section numerical simulation results and discussions are provided for validation.

\section{Compressive Signal Classification}
\label{sec:compclassf}
\par
Derived from the traditional theory of statistical signal detection and estimation \cite{KayV2}, in the Compressive Classification scenario,  the sparse signals is compressed by under-determined linear transformation first, then these compressed measurements are used to do classification. As is stated in Davenport's works \cite{Davenport06detectionand}\cite{SPComp}, the model of the 2-ary compressive signal classification can be expressed as:
\begin{equation}\label{CompressHypo}
\bm{y} =
\left \{ \begin{array}{ll}
	\bm{\Phi}(\bm{s_1} + \bm{n}) & \text{Hypothesis  $H_1$} \\
	\bm{\Phi}(\bm{s_2} + \bm{n}) & \text{Hypothesis  $H_2$} \\

\end{array} \right.
\end{equation}
in which $\bm{s_i}\in\Lambda_k=\{\bm{s} \in\mathbb{R}^N,\|\bm{s}\|_0\leq k\},i=1,2$ are k-sparse signals, i.e. the vectors with at most k nonzero elements, besides they are orthogonal to each other and have equal norms ($\langle \bm{s_i},\bm{s_j}\rangle = 0,\|\bm s_i\|_2^2=\|\bm s_j\|_2^2,i \neq j$). $\bm{y}\in \mathbb{R}^n$ is the received compressive measurements, $\bm{n}\thicksim \mathcal{N} (0,\sigma^2\bm{I_N})$ is the Gaussian noise contaminating the sparse signals. And $\bm{\Phi}\in \mathbb{R}^{n\times N},n<N$ is the under-determined measurement matrix, with full row rank and equi-norm columns, and satisfying Restricted Isometry Property (RIP) \cite{Decoding}\cite{SimpleProof}. It is best that we focus on the scenario of 2-ary Compressive Classification first, which means there are 2 hypotheses in (\ref{CompressHypo}). More general results of multiple hypotheses (more than 2) can be easily deduced from the 2-ary scenario.
\par
According to the works of Davenport \cite{Davenport06detectionand}\cite{SPComp}, Zahedi \cite{Ramin2010} and Rao \cite{RaoICCASP2012}, under the constraint of row-orthogonal ($\bm{\Phi\Phi^T=I})$ to the measurement matrices, the following classifier can be used:
\begin{eqnarray}
t_i&=&\bm{y^T\bm{\Phi} s_i} - \frac{1}{2}\|\bm{\Phi s_i}\|_2^2\nonumber \\
&=&\langle\bm{y},\bm{\Phi}\bm{s_i}\rangle-\frac{1}{2}\langle \bm{\Phi s_i},\bm{\Phi s_i} \rangle,\quad i=1,2 \label{OMPDetct}.
\end{eqnarray}
And the classification result is
\begin{equation}\label{MatchFitler}
i^*=\arg \max_{i} \{t_i\} \quad i=1,2.
\end{equation}
\par
The classifier in (\ref{OMPDetct}) is a most commonly used classifier, it is composed of a term of correlation and an minus term for normalization. The correlation part $\bm{y^T\bm{\Phi} s_i}$ also plays an important role in iterations of the Orthogonal Matching Pursuit (OMP) algorithm \cite{AnalOMP}. Since all the known analysis of Compressive Classifiers is restricted to row-orthogonal measurement matrices, our work is going to break this restriction. So the performance of the Compressive Classifier in (\ref{OMPDetct}) with more general measurement matrices that may or may not be row-orthogonal will be analyzed as follows.
\begin{Theorem}
If we use Compressive Classifier (\ref{OMPDetct}) and (\ref{MatchFitler}) to classify (\ref{CompressHypo}), the probability of false classification can be expressed as:
\begin{equation}
\mathcal{P}_E =
Q(\frac{\|\bm{\Phi(s_1-s_2)}\|_2^2}{2\sigma\|\bm{\Phi^T\Phi(s_1-s_2)}\|_2}) \label{CompressDetect}.
\end{equation}
Here $Q(x)=\int_{x}^{\infty} {\frac{1}{\sqrt{2\pi}}\exp{(-\frac{t^2}{2})}\mathrm{d}t}$, and $\bm \Phi \in \mathbb{R}^{n\times N},n<N$ is the measurement matrix.
\end{Theorem}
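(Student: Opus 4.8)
The plan is to collapse the binary decision into a single scalar Gaussian test statistic and then read the error probability off the standard $Q$-function. First I would form the difference $T := t_1 - t_2$, so that the rule $i^* = \arg\max_i t_i$ becomes: decide $H_1$ when $T>0$ and $H_2$ when $T<0$. Writing $\bm{d} := \bm{\Phi}(\bm{s_1}-\bm{s_2})$ and using (\ref{OMPDetct}), we have $T = \bm{y}^T\bm{d} - \tfrac{1}{2}(\|\bm{\Phi s_1}\|_2^2 - \|\bm{\Phi s_2}\|_2^2)$, which is an affine function of $\bm{y}$. Since under each hypothesis $\bm{y} = \bm{\Phi s_i} + \bm{\Phi n}$ with $\bm{n}\sim\mathcal{N}(0,\sigma^2\bm{I_N})$, the vector $\bm{y}$ is Gaussian and hence $T$ is a scalar Gaussian random variable. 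The whole problem thus reduces to computing its conditional mean and variance under $H_1$ and $H_2$.

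Next I would compute these moments. For the mean under $H_1$, substituting $\bm{y}=\bm{\Phi s_1}+\bm{\Phi n}$ and taking expectation annihilates the zero-mean noise term, leaving $\mathbb{E}[T\mid H_1] = (\bm{\Phi s_1})^T\bm{d} - \tfrac{1}{2}(\|\bm{\Phi s_1}\|_2^2 - \|\bm{\Phi s_2}\|_2^2)$. A completion-of-squares identity, namely $(\bm{\Phi s_1})^T(\bm{\Phi s_1}-\bm{\Phi s_2}) - \tfrac{1}{2}\|\bm{\Phi s_1}\|_2^2 + \tfrac{1}{2}\|\bm{\Phi s_2}\|_2^2 = \tfrac{1}{2}\|\bm{\Phi}(\bm{s_1}-\bm{s_2})\|_2^2$, collapses this to $\tfrac{1}{2}\|\bm{d}\|_2^2$; the analogous calculation under $H_2$ gives $-\tfrac{1}{2}\|\bm{d}\|_2^2$. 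The crucial structural point is that the two conditional means sit symmetrically about the decision threshold $0$, which is exactly what the normalization term $-\tfrac{1}{2}\|\bm{\Phi s_i}\|_2^2$ in the classifier is designed to buy.

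The variance is where I expect the real work to lie, because the noise entering $\bm{y}$ is colored: $\bm{\Phi n}$ has covariance $\sigma^2\bm{\Phi\Phi}^T$, which is not a multiple of the identity unless $\bm{\Phi}$ is row-orthogonal. The key observation is that only the projection of this noise onto $\bm{d}$ affects $T$, so its fluctuating part is $(\bm{\Phi n})^T\bm{d} = \bm{n}^T(\bm{\Phi}^T\bm{d})$; transferring $\bm{\Phi}^T$ onto $\bm{d}$ turns the colored measurement noise into a plain linear form in the white noise $\bm{n}$. Hence $\mathrm{Var}(T) = \sigma^2\|\bm{\Phi}^T\bm{d}\|_2^2 = \sigma^2\|\bm{\Phi}^T\bm{\Phi}(\bm{s_1}-\bm{s_2})\|_2^2$ under both hypotheses, and this is precisely the origin of the denominator in (\ref{CompressDetect}).

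Finally I would assemble the error probability. Under $H_1$ a false classification is the event $\{T<0\}$ and under $H_2$ it is $\{T>0\}$; standardizing $T$ to a unit Gaussian $Z$ and using $P(Z<-c)=P(Z>c)=Q(c)$, each conditional error probability equals $Q\big(\tfrac{1}{2}\|\bm{d}\|_2^2/(\sigma\|\bm{\Phi}^T\bm{d}\|_2)\big)$. Because the two conditional errors coincide, their average under any prior equals this common value, yielding $\mathcal{P}_E = Q\big(\|\bm{\Phi}(\bm{s_1}-\bm{s_2})\|_2^2/(2\sigma\|\bm{\Phi}^T\bm{\Phi}(\bm{s_1}-\bm{s_2})\|_2)\big)$ as claimed. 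I would remark that this derivation never invokes the orthogonality or equal-norm hypotheses on $\bm{s_1},\bm{s_2}$, nor row-orthogonality of $\bm{\Phi}$, so the formula holds for an arbitrary full-row-rank $\bm{\Phi}$ — which is exactly the generality needed for the later comparison between $\bm{\Phi}$ and its tightened, row-orthogonalized version.
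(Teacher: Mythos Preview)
Your proposal is correct and follows essentially the same approach as the paper: both arguments identify the decision statistic as Gaussian, compute its conditional mean and variance under $H_1$, and read off the $Q$-function expression. The only cosmetic difference is that the paper first writes down the joint distribution of $(t_1,t_2)$ before passing to $t_1-t_2$, whereas you form the scalar difference $T=t_1-t_2$ from the outset; your route is a bit more direct and your variance calculation via $(\bm{\Phi n})^T\bm d=\bm n^T\bm{\Phi}^T\bm d$ makes the appearance of $\|\bm{\Phi}^T\bm{\Phi}(\bm{s_1}-\bm{s_2})\|_2$ more transparent than the paper's terse ``using the property of Gaussian distribution.''
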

\begin{proof}
The probability of false classification, or error probability is \cite{KayV2}
\begin{eqnarray}\label{ErrorProb}
\mathcal{P}_E &=& 1 -
\sum_{k=1}^2 \mathcal{P}(t_k >t_j,j\neq k|H_k)\mathcal{P}(H_k).
\end{eqnarray}\par
In general, the priori probability of different hypotheses can be assumed as uniformly distributed ($\mathcal{P}(H_i)=\mathcal{P}(H_j), i \neq j$), and the conditional probability in (\ref{ErrorProb}) is assumed to be equal \cite{KayV2} for symmetry, that is
\begin{eqnarray}
\mathcal{P}(t_1 > t_2|H_1) =
 \mathcal{P}(t_2 > t_1|H_2)
\end{eqnarray}
then the error probability is
\begin{eqnarray}
\mathcal{P}_E &=& 1 - \mathcal{P}(t_1 > t_2|H_1)
.
\end{eqnarray}
Thus the probability can be analyzed with respect to the conditional probability $\mathcal{P}( t_1> t_2|H_1)$ under the $H_1$ hypothesis.
The statistics (\ref{OMPDetct}) under the $H_1$ hypothesis have the following joint distributions
\begin{eqnarray}
\left[ t_1 \quad t_2 \right]^T
		& \thicksim & \mathcal{N}(\bm \mu_{1,2},\bm \Sigma_{1,2})
\end{eqnarray}
and
\begin{eqnarray}
\bm \mu_{1,2}=\left[ \begin{array}{c}
			 \frac{1}{2}\|\bm{\Phi s_1}\|_2^2 \\
			 \langle\bm{\Phi s_2},\bm{\Phi s_1}\rangle-\frac{1}{2}\|\bm{\Phi s_2}\|_2^2
		\end{array} \right] \nonumber
\end{eqnarray}
\begin{equation}
\bm \Sigma_{1,2} =\sigma^2\left[ \begin{array}{cc}
									\|\bm{\Phi^T\Phi s_1}\|_2^2 &\langle \bm{\Phi^T\Phi s_1},\bm{\Phi^T\Phi s_2}\rangle \\
									\langle \bm{\Phi^1\Phi s_2},\bm{\Phi^T\Phi s_1}\rangle &\|\bm{\Phi^T\Phi s_2}\|_2^2
								\end{array}\right].\nonumber
\end{equation}
Using the property of Gaussian distribution, the probability of false classification is then
\begin{eqnarray}
\mathcal{P}_E 
&=&Q(\frac{\|\bm{\Phi(s_1-s_2)}\|_2^2}{2\sigma\|\bm{\Phi^T\Phi(s_1-s_2)}\|_2})\nonumber
\end{eqnarray}
and $Q(x)=\int_{x}^{\infty} {\frac{1}{\sqrt{2\pi}}\exp{(-\frac{t^2}{2})}\mathrm{d}t}$.
\end{proof}
\par As a matter of fact, if there is more than 2 hypotheses in the Compressive Classification problem (\ref{CompressHypo}), the error probability of the Compressive Classifier (\ref{OMPDetct}) and (\ref{MatchFitler}) may not be so explicit as (\ref{CompressDetect}) for the statistical correlation between different $t_i$'s in (\ref{OMPDetct}). But similar techniques can be utilized and same results can be deduced, we will discuss these m-ary ($m>2$) hypotheses scenarios in the next section.
\par So in order to analyze the error probability (\ref{CompressDetect}) of classifier (\ref{OMPDetct}) without the constraint of row-orthogonality to measurement matrices and for all possible k-sparse signals $\bm s_i,i=1,2$, we will have to focus on
\begin{equation}
\frac{\|\bm{\Phi(s_1-s_2)}\|_2^2}{\|\bm{\Phi^T\Phi(s_1-s_2)}\|_2} \label{MainTarget}
\end{equation}
for all k-sparse signals $\bm{s_1,s_2}\in\Lambda_k=\{\bm{s} \in\mathbb{R}^N,\|\bm{s}\|_0\leq k\}$. \par
In the cases where measurement matrices satisfying row-orthogonality ($\bm{\Phi\Phi^T}=\bm I$), (\ref{MainTarget}) is then reduced to
\begin{equation}
\frac{\|\bm{\Phi(s_1-s_2)}\|_2^2}{\|\bm{\Phi^T\Phi(s_1-s_2)}\|_2} = \|\bm{\Phi(s_1-s_2)}\|_2.
\end{equation}
And this is what Davenport \cite{Davenport06detectionand}\cite{SPComp} and Zahedi \cite{Ramin2010}\cite{Zahedi201264} analyzed in their publications.

\section{measurement Matrices and the Error Probability of Compressive Signal Classification}
\label{sec:main}
\par Although there has been plenty of works about the performance analysis of Compressive Classification, all these works have the same row-orthogonality presumption, but without a theoretical explanation. However, what we believe is that there exist other important reasons for the row-orthogonal condition to be necessary. Here is our main result of this paper:
\begin{Theorem}
In the Compressive Classification problem (\ref{CompressHypo}), by tightening or row-orthogonalizing the measurement matrix $\bm \Phi \in \mathbb{R}^{n \times N},n < N$, the error probability (\ref{CompressDetect}) of the classifier (\ref{OMPDetct}) will be reduced, which means
\begin{equation}\label{MainIneq}
\frac{\|\bm{\Phi(s_1-s_2)}\|_2^2}{\|\bm{\Phi^T\Phi(s_1-s_2)}\|_2} \leq
\frac{\|\bm{\hat \Phi(s_1-s_2)}\|_2^2}{\|\bm{\hat \Phi^T\hat \Phi(s_1-s_2)}\|_2}
\end{equation}
where $\bm \Phi \in \mathbb{R}^{n \times N}$ is the arbitrary measurement matrix, and $\bm {\hat \Phi} \in \mathbb{R}^{n \times N}$ is the equi-norm tight frame measurement matrix row-orthogonalized from $\bm \Phi$.
\end{Theorem}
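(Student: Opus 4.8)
The plan is to make the row-orthogonalization map completely explicit and then reduce the claimed inequality to a single application of the Cauchy--Schwarz inequality. Since $\bm\Phi$ has full row rank, $\bm\Phi\bm\Phi^T$ is symmetric positive definite, and the tightened matrix I would work with is $\bm{\hat\Phi}=(\bm\Phi\bm\Phi^T)^{-1/2}\bm\Phi$, which indeed satisfies $\bm{\hat\Phi}\bm{\hat\Phi}^T=\bm I$. First I would introduce the singular value decomposition $\bm\Phi=\bm U\bm\Sigma\bm V^T$, with singular values $\sigma_1,\dots,\sigma_n>0$, and observe that the tightening leaves $\bm U$ and $\bm V$ untouched while replacing every $\sigma_i$ by $1$; concretely $\bm{\hat\Phi}=\bm U[\bm I_n\ \bm 0]\bm V^T$. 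The crucial structural point is that both ratios in (\ref{MainIneq}) are thereby expressed in the \emph{same} orthonormal coordinates.

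Writing $\bm w=\bm V^T(\bm s_1-\bm s_2)$ and $a_i=w_i^2\ge 0$, a direct computation in these coordinates gives $\|\bm\Phi(\bm s_1-\bm s_2)\|_2^2=\sum_{i=1}^n\sigma_i^2 a_i$ and $\|\bm\Phi^T\bm\Phi(\bm s_1-\bm s_2)\|_2^2=\sum_{i=1}^n\sigma_i^4 a_i$; setting every $\sigma_i=1$ yields the corresponding quantities $\sum_{i=1}^n a_i$ for the tightened matrix $\bm{\hat\Phi}$. Substituting into (\ref{MainIneq}) and squaring (all terms being nonnegative), the claim reduces to
\begin{equation}
\frac{\left(\sum_{i=1}^n \sigma_i^2 a_i\right)^2}{\sum_{i=1}^n \sigma_i^4 a_i}\ \le\ \sum_{i=1}^n a_i . \nonumber
\end{equation}

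This last inequality is exactly Cauchy--Schwarz applied to the vectors with entries $\sigma_i^2\sqrt{a_i}$ and $\sqrt{a_i}$, since $\left(\sum_i \sigma_i^2 a_i\right)^2=\left(\sum_i(\sigma_i^2\sqrt{a_i})(\sqrt{a_i})\right)^2\le\left(\sum_i \sigma_i^4 a_i\right)\left(\sum_i a_i\right)$. Hence the bound holds for every $\bm w$, i.e. for every vector $\bm s_1-\bm s_2$, with equality precisely when all $\sigma_i$ acting on the support of $\bm w$ coincide --- in particular when $\bm\Phi$ is already tight. It is worth stressing that the argument nowhere invokes $k$-sparsity or the RIP, so the improvement is uniform over all admissible signal pairs.

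As for difficulty, I anticipate essentially no analytic obstacle once the set-up is in place; the entire content is the Cauchy--Schwarz step. The only care-demanding part is the bookkeeping: verifying that the tightening $(\bm\Phi\bm\Phi^T)^{-1/2}\bm\Phi$ preserves the singular vectors $\bm U,\bm V$ so that a single coordinate vector $\bm w$ serves both ratios, and confirming that the null-space directions of $\bm\Phi$ (the components $w_i$ with $i>n$) drop out of both numerator and denominator and hence do not affect the comparison.
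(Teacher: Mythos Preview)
Your proposal is correct and is essentially the same argument as the paper's: both take the SVD of $\bm\Phi$, observe that the row-orthogonalization $\bm{\hat\Phi}=(\bm\Phi\bm\Phi^T)^{-1/2}\bm\Phi$ replaces the singular values by $1$ while keeping $\bm U,\bm V$, rewrite both ratios in the coordinates $\bm V^T(\bm s_1-\bm s_2)$, and finish with Cauchy--Schwarz on the vectors $(\sigma_i^2\sqrt{a_i})$ and $(\sqrt{a_i})$. The only cosmetic differences are that the paper carries an explicit normalization constant $c$ and works with the unsquared inequality, and it states the equality condition as $\bm\Sigma_n^2=c\,\bm I_n$ for all signal pairs rather than in terms of the support of $\bm w$.
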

\begin{proof}
According to Section 2, the error probability (\ref{CompressDetect}) is determined by the following expression (\ref{MainTarget}):
\begin{equation}
\frac{\|\bm{\Phi(s_1-s_2)}\|_2^2}{\|\bm{\Phi^T\Phi(s_1-s_2)}\|_2}  \nonumber
\end{equation}
for all k-sparse signals $\bm s_1, \bm s_2$, where $\bm \Phi$ is an arbitrary measurement matrix satisfying RIP. \par
According to the basic presumptions of $\bm \Phi$ in (\ref{CompressHypo}), the arbitrary under-determined measurement matrix $\bm \Phi \in \mathbb{R}^{n \times N}$, $n < N$ has full row rank, thus the singular value decomposition of $\bm \Phi$ is
\begin{equation}\label{singular}
\bm \Phi = \bm{U \left[ \Sigma_n \quad O\right] V^T}
\end{equation}
Here $\bm \Sigma_n \in \mathbb{R}^{n \times n}$ is a diagonal matrix with each element $\bm \Phi$'s singular value $\sigma_j \neq 0$ $(1\leq j \leq n)$, and $\bm U \in \mathbb{R}^{n \times n}$, $\bm V \in \mathbb{R}^{N \times N}$ are orthogonal matrices composed of $\bm \Phi$'s left and right singular vectors. \par
If an arbitrary equi-norm measurement matrix $\bm \Phi$ is transformed into an equi-norm tight frame $\bm{\hat \Phi}$, we do orthogonalization to its row vectors, which is equivalent as:
\begin{equation}\label{tighten}
\bm{\hat \Phi} = \sqrt{c} \cdot \bm{U\Sigma_n^{-1}U^T\Phi} = \sqrt{c}\cdot \bm{U \left[ I_n \quad O\right] V^T}
\end{equation}
Where $\bm U \in \mathbb{R}^{n \times n}$, $\bm V \in \mathbb{R}^{N \times N}$ are $\bm \Phi$'s singular vector matrices. In a word, row-orthogonalization is equivalent to transforming all singular values of $\bm \Phi$ into equal ones. Thus $\bm{\hat \Phi \hat \Phi^T = c \cdot I_n}$, where $c>0$ is a certain constant for normalization.\par
Then
\begin{eqnarray}\label{TF}
\frac{\|\bm{\hat \Phi(s_1-s_2)}\|_2^2}{\|\bm{\hat \Phi^T\hat \Phi(s_1-s_2)}\|_2}
=
\| \left[ \begin{array}{cc}
									\bm{I_n} &\bm O
								\end{array}\right] \bm V^T(\bm{s_1-s_2})\|_2.
\end{eqnarray}
And for arbitrary measurement matrix $\bm \Phi$ that may not be row-orthogonal, we have
\begin{eqnarray}
\lefteqn{\frac{\|\bm{\Phi(s_1-s_2)}\|_2^2}{\|\bm{\Phi^T\Phi(s_1-s_2)}\|_2}} \nonumber \\
&=& \frac{\|\left[ \begin{array}{cc}
												\bm \Sigma_n & \bm O \end{array} \right]\bm{V^T(s_1-s_2)}\|_2^2}
			{\| \left[ \begin{array}{cc}
												\bm \Sigma_n^2 & \bm O \end{array} \right]\bm{V^T(s_1-s_2)}\|_2} .\label{singularfrac}
\end{eqnarray}
If we denote $\bm{V^T(s_1-s_2)}$ by $\bm u^{(1,2)}$, where $\bm u^{(1,2)} = [u_1, u_2, \cdots ,u_N]^T$. Then (\ref{singularfrac}) becomes
\begin{eqnarray}
\frac{\|\left[ \begin{array}{cc}
												\bm \Sigma_n & \bm O \end{array} \right]\bm{V^T(s_1-s_2)}\|_2^2}
			{\| \left[ \begin{array}{cc}
												\bm \Sigma_n^2 & \bm O\end{array} \right]\bm{V^T(s_1-s_2)}\|_2}
												= \frac{\sum_{j=1}^n {\sigma_j^2 u^{2}_j}}{\sqrt{\sum_{j=1}^n {\sigma_j^4 u^{2}_j}}}\nonumber \\
\leq \sqrt{\sum_{j=1}^n {u^{2}_j}}=
\| \left[ \begin{array}{cc}
									\bm{I_n} &\bm O
								\end{array}\right] \bm V^T(\bm{s_1-s_2})\|_2 . \label{CIneq}
\end{eqnarray}
The last inequality is derived from the Cauchy-Schwarz Inequality, combining (\ref{singularfrac}), (\ref{CIneq}) with (\ref{TF}), then we have
\begin{equation}
\frac{\|\bm{\Phi(s_1-s_2)}\|_2^2}{\|\bm{\Phi^T\Phi(s_1-s_2)}\|_2} \leq
\frac{\|\bm{\hat \Phi(s_1-s_2)}\|_2^2}{\|\bm{\hat \Phi^T\hat \Phi(s_1-s_2)}\|_2}
\end{equation}
which means that row-orthogonalization makes (\ref{MainTarget}) larger and thus brings lower error probability.
The condition when the equality holds is that
\begin{eqnarray}\label{condition}
\bm{\left[\Sigma_n^2\quad O\right]}\bm{V^T}(\bm{s_1-s_2})
=c \cdot \bm{\left[I_n\quad O\right]}\bm{V^T}(\bm{s_1-s_2})
\end{eqnarray}
where $c>0$ is a certain constant.\par
It is obvious that the equality in (\ref{MainIneq}) holds for all k-sparse signals $\bm s_1$ and $\bm s_2$, if and only if $\bm{\Sigma_n^2=c \cdot I_n}$, which means
\begin{equation}
\bm{\Phi\Phi^T} = c \cdot \bm I_n\label{TightCond}
\end{equation}
\par So the result of (\ref{MainIneq}) means that when arbitrary under-determined measurement matrices $\bm \Phi \in \mathbb{R}^{n \times N},n < N$ are transformed into an equi-norm tight frame, i.e. row-ortho-gonalized, the equality in (\ref{MainIneq}) will hold, then the value of (\ref{MainTarget}) will increase, which means improvement of the performance of Compressive Classifier (\ref{OMPDetct}).
\end{proof}\par
The constant $c>0$ above is an amplitude constant for normalization and can take any value, with the equi-norm presumption of measurement matrices, the following corollary can be deduced:
\begin{Corollary}
If a matrix $\bm \Phi\in \mathbb{R}^{n \times N}, n < N$ form an equi-norm tight frame, that is $\bm{\Phi\Phi^T}=c\cdot I_n$ and the column vectors satisfy $\|\bm \phi_i\|_2=\|\bm \phi_j\|_2=\psi$, then $c = \frac{N}{n}\psi^2$.
\end{Corollary}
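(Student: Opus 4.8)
The plan is to pin down the constant $c$ by evaluating a single scalar invariant of $\bm\Phi$ in two different ways. The key observation is that the tight-frame hypothesis $\bm{\Phi\Phi^T}=c\cdot\bm I_n$ fixes the trace of the $n\times n$ Gram matrix $\bm{\Phi\Phi^T}$, whereas the equi-norm hypothesis $\|\bm\phi_i\|_2=\psi$ fixes the trace of the $N\times N$ Gram matrix $\bm{\Phi^T\Phi}$; since these two traces coincide, the two constraints can be equated and solved for $c$.

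First I would compute $\mathrm{tr}(\bm{\Phi\Phi^T})$ straight from the tight-frame condition: because $\bm{\Phi\Phi^T}=c\cdot\bm I_n$, this trace is simply $c\,n$. Next I would compute the trace of the other Gram matrix $\bm{\Phi^T\Phi}\in\mathbb{R}^{N\times N}$, whose $i$-th diagonal entry is $\langle\bm\phi_i,\bm\phi_i\rangle=\|\bm\phi_i\|_2^2$; summing over all $N$ columns and using $\|\bm\phi_i\|_2^2=\psi^2$ gives $\mathrm{tr}(\bm{\Phi^T\Phi})=N\psi^2$. Then I would invoke the cyclic invariance of the trace (equivalently, the fact that both quantities equal the squared Frobenius norm $\|\bm\Phi\|_F^2=\sum_{i,j}\Phi_{ij}^2$) to conclude $\mathrm{tr}(\bm{\Phi\Phi^T})=\mathrm{tr}(\bm{\Phi^T\Phi})$. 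Equating the two expressions yields $c\,n=N\psi^2$, and dividing by $n$ gives $c=\frac{N}{n}\psi^2$, as claimed.

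There is essentially no hard step in this argument; the only point demanding care is not to conflate the two Gram matrices, since $\bm{\Phi\Phi^T}$ (of size $n\times n$) carries the tight-frame information while $\bm{\Phi^T\Phi}$ (of size $N\times N$) carries the column-norm information, and it is precisely the trace that bridges the two. As an alternative I could route the same computation through the singular value decomposition $\bm\Phi=\bm{U[\Sigma_n\ O]V^T}$ already used in the proof of Theorem 2: the tight-frame condition forces $\bm{\Sigma_n^2}=c\cdot\bm I_n$, so every singular value equals $\sqrt{c}$, whence $\|\bm\Phi\|_F^2=\sum_{j=1}^n\sigma_j^2=c\,n$ must match $\sum_{i=1}^N\|\bm\phi_i\|_2^2=N\psi^2$, again recovering $c=\frac{N}{n}\psi^2$. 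I would present the trace version as the main proof since it is the most transparent.
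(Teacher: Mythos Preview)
Your proposal is correct and follows essentially the same route as the paper: compute $\mathrm{tr}(\bm{\Phi\Phi^T})=nc$ and $\mathrm{tr}(\bm{\Phi^T\Phi})=N\psi^2$, invoke the trace identity $\mathrm{tr}(\bm{\Phi\Phi^T})=\mathrm{tr}(\bm{\Phi^T\Phi})$, and solve for $c$. The paper's proof is exactly this one-line trace argument, so your main version matches it precisely.
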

\begin{proof}
If $\bm \Phi$ has equal column norms and satisfies $\bm{\Phi\Phi^T}=c\cdot I_n$, then
\begin{eqnarray}
tr(\bm{\Phi^T\Phi})=N\cdot \psi ^2= tr(\bm{\Phi\Phi^T})=n\cdot c.
\end{eqnarray}
As a result, $c=\frac{N}{n}\psi^2$.
\end{proof}\par
If we let $c = 1$, then we can get $\|\bm \phi_i\|_2=\|\bm \phi_j\|_2=\sqrt{n/N}$, which coincides with the results of \cite{Ramin2010} and \cite{Zahedi201264}.

\par Before the end of this section, some important discussions are believed to be necessary here.
\par Remark 1:
Further analysis of the result of Theorem 2 indicates that, when the measurement matrices of the commonly used Compressive Classifier (\ref{OMPDetct}) are "tightened", i.e. row-orthogonalized, then the inequality (\ref{MainIneq}) becomes equality, and the corresponding error probability will become
\begin{eqnarray}\label{MFeqivalent}
\mathcal{P}_E(\hat{\bm \Phi})=
 Q(\frac{\|\bm{\hat \Phi(s_1-s_2)}\|_2}{2 c^{-1/2}\cdot \sigma})\nonumber \\
= Q(\frac{\|\bm{P_{\Phi^T} (s_1-s_2)}\|_2}{2 c^{-1/2}\cdot \sigma})
\end{eqnarray}
where $\bm{P_{\Phi^T} = \Phi^T ( \Phi  \Phi^T)^{-1} \Phi}$. The last equality is derived from (\ref{singular}) and (\ref{tighten}), which equals, as a matter of fact, the error probability of the General Matched Filter Classifier \cite{SPComp}\cite{PHDThesis}:
\begin{eqnarray}
\hat t_i=
\langle\bm{y},\bm{(\Phi \Phi^T)^{-1}\Phi}\bm{s_i}\rangle-\frac{1}{2}\|\bm{P_{\Phi^T} s_i} \|_2^2,\quad i=1,2. \label{MFDetct}
\end{eqnarray}
Thus Theorem 2 and (\ref{MFeqivalent}) indicates that equi-norm tight frames will improve the Compressive Classifier (\ref{OMPDetct}) to the level of General Matched Filter Classifier (\ref{MFDetct}) in the sense of error probability. Although it is obvious that row-orthogonality (\ref{TightCond}) sufficiently ensures (\ref{MFDetct}) to become equivalent as (\ref{OMPDetct}), the necessity with row-orthogonality, or "tightness", of the equivalence between the Compressive Classifier (\ref{OMPDetct}) and the General Matched Filter Classifier (\ref{MFDetct}) is not so explicit but demonstrated by Theorem 2 and (\ref{MFeqivalent}). 
\par Besides, the error probability (\ref{MFeqivalent}) coincides with Davenport's \cite{Davenport06detectionand} and \cite{SPComp} and Zahedi's \cite{Ramin2010} and \cite{Zahedi201264}, where they constrained row-orthogonality $\bm{\Phi \Phi^T = c \cdot I}$ and set $c = 1$. So Theorem 2 explains the benefits of using the row-orthogonal constraint to do Compressive Classification.
 Similar discussions about the improvement of equi-norm tight frames to oracle estimators can be found in \cite{UniTightFrame}, which is another good support of the advantage of "tight".
\par Remark 2:
As is mentioned in last section, when there are more than 2 hypotheses, the m-ary ($m>2$) compressive classification problem model becomes:
\begin{equation}\label{CompressHypo_mary}
\bm{y} =
\left \{ \begin{array}{ll}
	\bm{\Phi}(\bm{s_1} + \bm{n}) & \text{Hypothesis  $H_1$} \\
	\bm{\Phi}(\bm{s_2} + \bm{n}) & \text{Hypothesis  $H_2$} \\
	\cdots & \cdots \\
	\bm{\Phi}(\bm{s_m} + \bm{n}) & \text{Hypothesis  $H_m$} \\

\end{array} \right. .
\end{equation}
Using the same Compressive Classifier
\begin{eqnarray}
t_i
=\langle\bm{y},\bm{\Phi}\bm{s_i}\rangle-\frac{1}{2}\langle \bm{\Phi s_i},\bm{\Phi s_i} \rangle,\quad i=1,2,\cdots ,m .
\end{eqnarray}
The corresponding error probability will be 
\begin{eqnarray}
\mathcal{P}_E 
=1 - \mathcal{P}( t_T> t_i,\forall i \neq T|H_T) ,\quad T = 1,2,\cdots,m.
\end{eqnarray}
Combined with the Union Bound of probability theory, the error probability then satisfies
\begin{eqnarray}
\mathcal{P}_E \leq
\sum_{i \neq T}^m Q(\frac{\|\bm{\Phi(s_T-s_i)}\|_2^2}{2\sigma\|\bm{\Phi^T\Phi(s_T-s_i)}\|_2}), T = 1,2,\cdots,m .\label{mErrorProb}
\end{eqnarray}
The error probability (\ref{mErrorProb}) is similar to (\ref{CompressDetect}) except for the inequality due to the use of union bound. In fact, it may be difficult to get any more accurate result than (\ref{mErrorProb}), because of the statistical correlation between different $t_i$'s. It may not be persuasive to conclude the error probability's decrease brought by equi-norm tight frames, using the same proof in Theorem 2 in this m-ary scenario. because of the inequality in (\ref{mErrorProb}); however, simulation results in the next section will demonstrate that equi-norm tight frames are still better in the m-mary ($m>2$) Compressive Classification scenario.
\par Remark 3: In comparison with the the work of Zahedi in \cite{Ramin2010} and \cite{Zahedi201264}, where Equiangular Tight Frames (ETFs) are proved to have the best worst-case performance among all tight frames (row-orthogonal constrained matrices), we just give the proof that for general under-determined measurement matrices, tightening can bring performance improvement for Compressive Classification. Our job is different from theirs, because all of Zahedi's analysis is based on the constraint that the measurement matrices are tight, or row-orthogonal, and the advantage of Equiangular Tight Frames (ETFs, \cite{Waldron2009}) is that ETFs have the best worst-case (maximum of the minimum) performance among all tight frames of same dimensions, while our result shows that when arbitrary measurement matrices is "tightened", i.e. transformed into equi-norm tight frames, the performance of Compressive Classification will get improved. Nonetheless, the existence and construction of ETFs of some certain dimensions remains an open problem (\cite{Waldron2009}), while doing row-orthogonalization for arbitrary matrices is very easy and practical. So our results provided a convenient approach to improve the performance of compressive classifiers.
\par

\section{Simulations}
\label{sec:simulation}

\begin{figure}
\centering
	\begin{minipage}[htbp]{0.5\textwidth}
	\centering
		\includegraphics[width=3.6in]{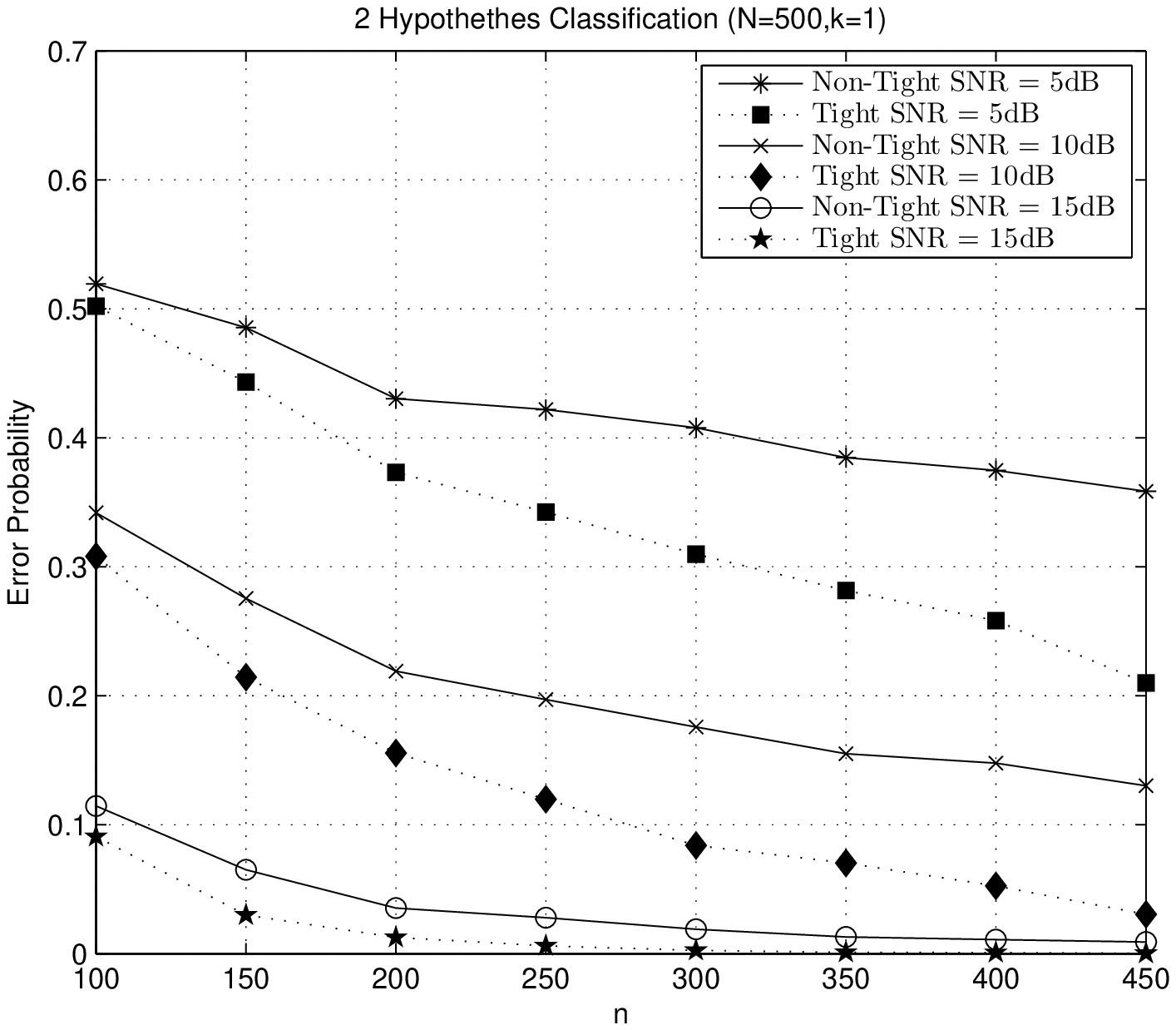}
		\caption{Monte-Carlo simulation of 2-ary compressive classification error probability using non-tight frames and tight frames (for $k=1$ sparse signals)}
		\label{figure1}
		\end{minipage}
	\begin{minipage}[htbp]{0.5\textwidth}
	\centering
		\includegraphics[width=3.6in]{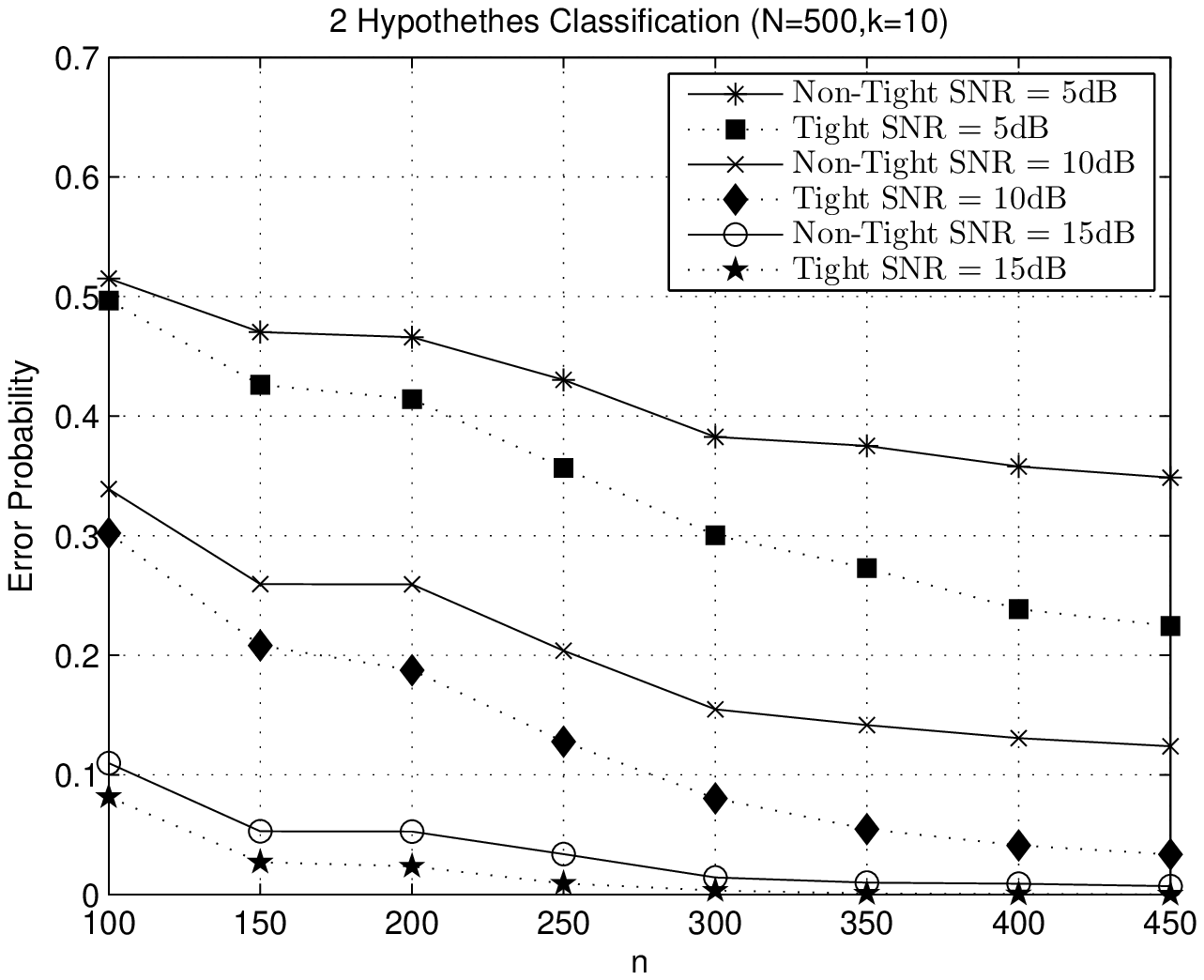}
		\caption{Monte-Carlo simulation of 2-ary compressive classification error probability using non-tight frames and tight frames (for $k=10$ sparse signals)}
		\label{figure2}
		\end{minipage}
	
	\end{figure}
	\begin{figure}
		\begin{minipage}[htbp]{0.5\textwidth}
		\centering
		\includegraphics[width=3.6in]{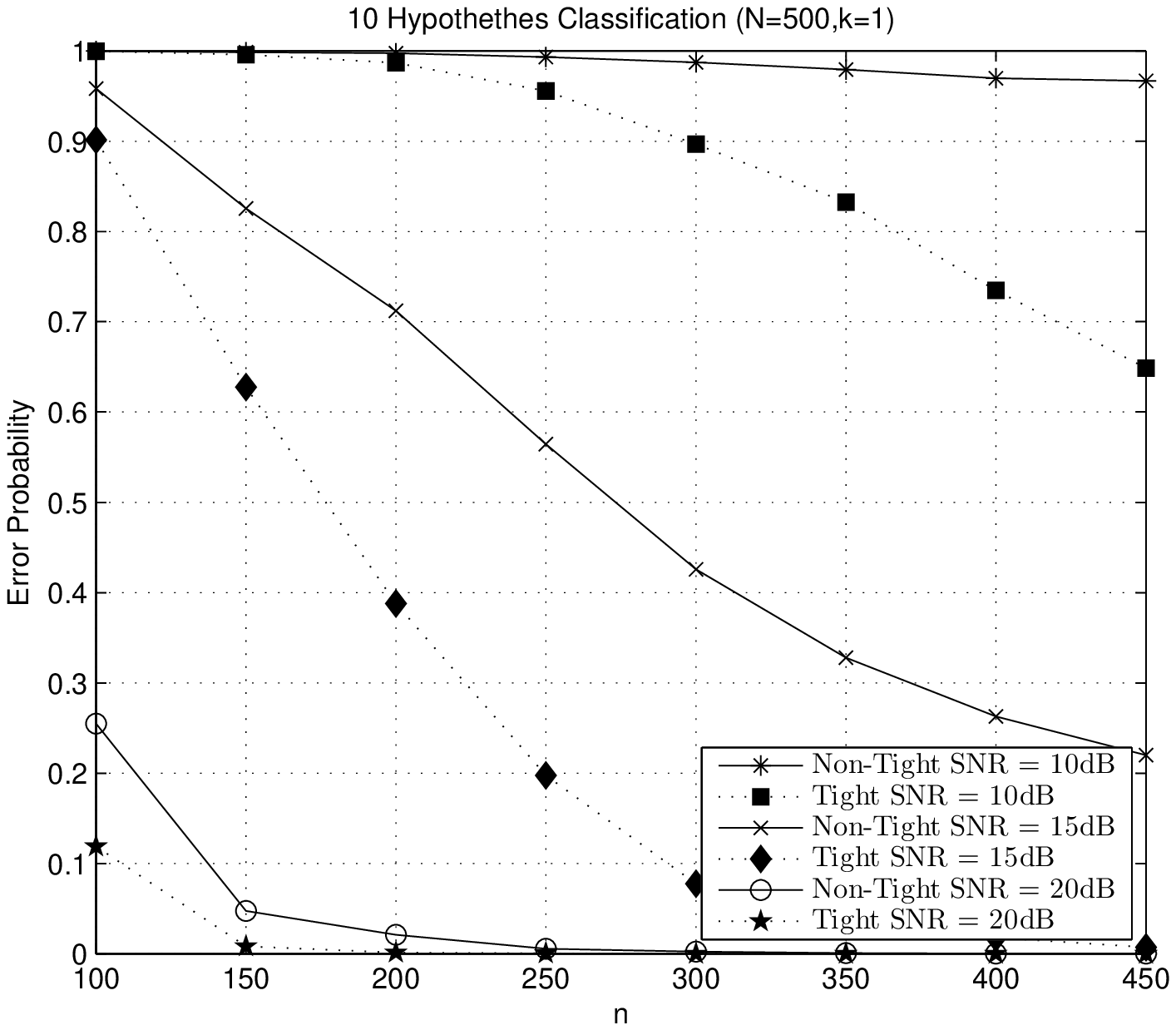}
		\caption{Monte-Carlo simulation of 10-ary compressive classification error probability using non-tight frames and tight frames (for $k=1$ sparse signals)}
		\label{figure3}		
	\end{minipage}\\ %
		\begin{minipage}[htbp]{0.5\textwidth}
		\centering
		\includegraphics[width=3.6in]{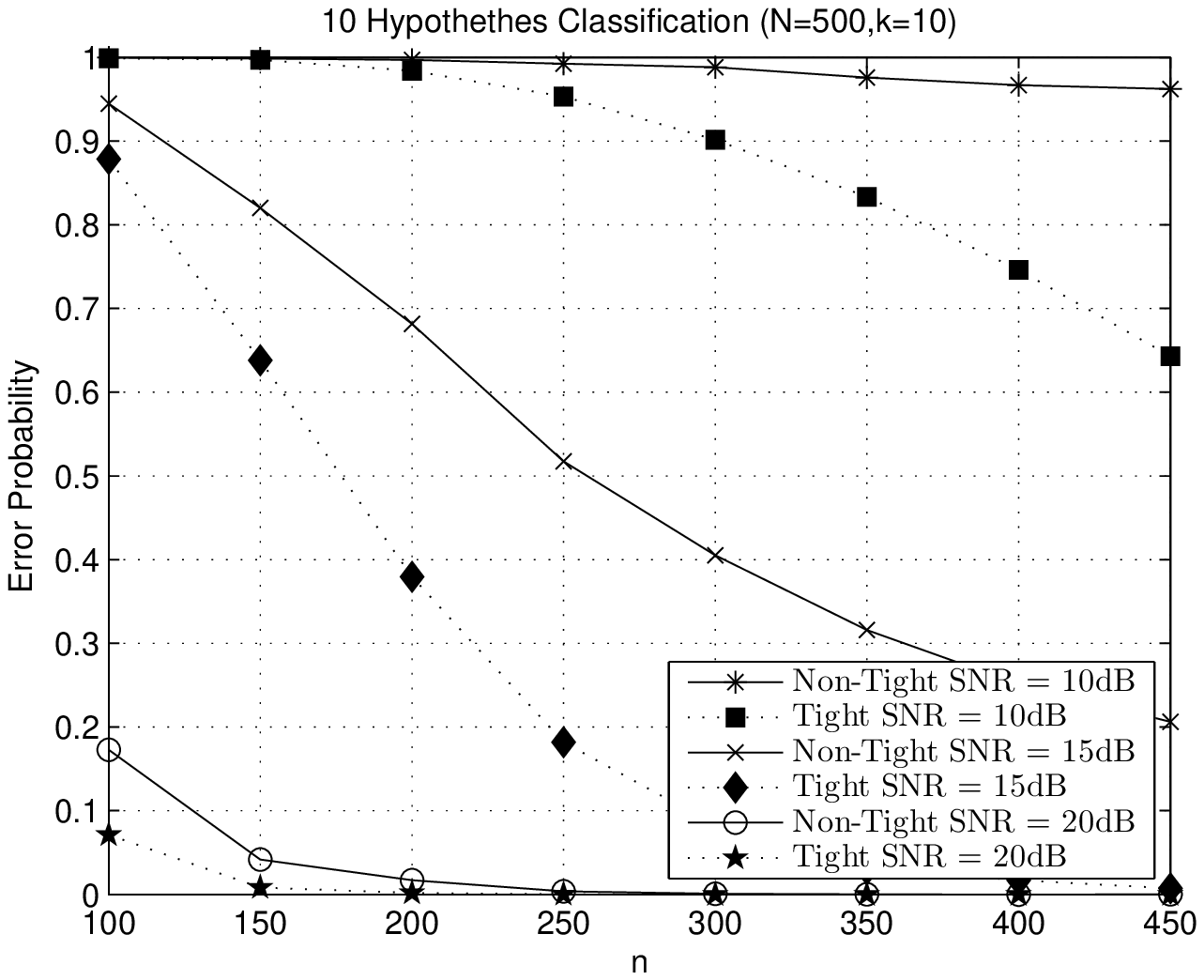}
		\caption{Monte-Carlo simulation of 10-ary compressive classification error probability using non-tight frames and tight frames (for $k=10$ sparse signals)}
		\label{figure4}		
	\end{minipage}%
\end{figure}
\par In this section the main result of theorem 2 is verified by Monte-Carlo simulations. In the simulation some arbitrary $k=1$ and $k=10$ sparse signals are generated and classified using non-tight frames and tight frames. The Gaussian Random Matrices are chosen to be the non-tight frames, and the row-orthogonalized ones from those random matrices are chosen as the tight frames. Here we choose $N=500$, and the error probabilities of both 2-ary Compressive Classification and 10-ary Compressive Classification are demonstrated in Fig.\ref{figure1} and Fig.\ref{figure3} for $k=1$ sparse signals, and Fig.\ref{figure2} and Fig.\ref{figure4} for $k=10$ sparse signals, with the number of measurements $n$ ranging from 100 to 450 and signal to noise ratios $\bm{\|s_i\|_2^2}/\sigma^2$ from 5 dB to 20 dB. Each error probability is calculated from average of 10000 independent experiments with tight or non-tight measurement matrices.
\par The simulation shows that equi-norm tight frames transformed from general Gaussian Random Matrices have better Compressive Classification performance than those non-tight Gaussian Random Matrices within $n$'s whole range, both for 2-ary classification and m-ary ($m>2$) classification scenarios, which is the benefit that "tightening" brings.

\section{Conclusion}
\label{sec:conclusion}
\par This paper deals with the performance improvement of a commonly used Compressive Classifier (\ref{OMPDetct}). We prove that the transformation to equi-norm tight Frames from arbitrary measurement matrices will reduce the probability of false classification of the commonly used Compressive Classifier, thus improve the classification performance to the level of the General Matched Filter Classifier (\ref{MFDetct}), which coincides with the row-orthogonal constraint commonly used before. Although there are other proofs that among all equi-norm tight frames the Equiangular Tight Frames (ETFs) achieve best worst-case classification performance, the existence and construction of ETFs of some dimensions is still an open problem.
As the construction of equi-norm tight frames from arbitrary matrices is much simple and practical, the conclusion of this paper can also provide a convenient approach to implement an improved measurement matrix for Compressive Classification.



\bibliographystyle{IEEEbib}
\bibliography{CompressiveDetection.bib}

\end{document}